\newtheorem{theorem}{Theorem}
\newenvironment{proof}[1][Proof.]{\begin{trivlist}
\item[\hskip \labelsep {\bfseries #1}]}{\end{trivlist}}
\newcommand{\AmS}{{\protect\the\textfont2
  A\kern-.1667em\lower.5ex\hbox{M}\kern-.125emS}}
\title{\begin{center}\Large {\bf A note on upper bounds for the maximum span in interval edge colorings of graphs}\end{center}}
\author{\begin{center}\normalsize R.R. Kamalian$^\dag$$^\ddag$, P.A. Petrosyan$^\dag$$^\S$ \\
\small $^\dag$Institute for Informatics and Automation Problems,\\
National Academy of Sciences, 0014, Armenia,\\
\small $^\ddag$Department of Applied Mathematics and Informatics,\\
Russian-Armenian State University, 0051, Armenia,\\
\small $^\S$Department of Informatics and Applied Mathematics,\\
Yerevan State University, 0025, Armenia,\\
\small e-mail: rrkamalian@yahoo.com,
pet\_petros@ipia.sci.am\end{center}}
\begin{document}

% typeset front matter
\maketitle

\begin{abstract}
An edge coloring of a graph $G$ with colors $1,2,\ldots ,t$ is
called an interval $t$-coloring if for each $i\in \{1,2,\ldots,t\}$
there is at least one edge of $G$ colored by $i$, the colors of
edges incident to any vertex of $G$ are distinct and form an
interval of integers. In 1994 Asratian and Kamalian proved that if a
connected graph $G$ admits an interval $t$-coloring, then $t\leq
\left( d+1\right) \left( \Delta -1\right) +1$, and if $G$ is also
bipartite, then this upper bound can be improved to $t\leq d\left(
\Delta -1\right) +1$, where $\Delta$ is the maximum degree in $G$
and $d$ is the diameter of $G$. In this paper we show that these
upper bounds can not be significantly improved.

Keywords: edge coloring, interval coloring, bipartite graph,
diameter of a graph

\end{abstract}

\section{Introduction}\

An edge coloring of a graph $G$ with colors $1,2,\ldots ,t$ is
called an interval $t$-coloring if for each $i\in \{1,2,\ldots,t\}$
there is at least one edge of $G$ colored by $i$, the colors of
edges incident to any vertex of $G$ are distinct and form an
interval of integers. The concept of interval edge colorings was
introduced by Asratian and Kamalian \cite{b1}. In \cite{b1} they
proved that if a triangle-free graph $G=\left(V,E\right)$ has an
interval $t$-coloring, then $t\leq \left\vert V\right\vert -1$.
Furthermore, Kamalian \cite{b9} showed that if $G$ admits an
interval $t$-coloring, then $t\leq 2\left\vert V\right\vert -3$.
Giaro, Kubale and Malafiejski \cite{b5} proved that this upper bound
can be improved to $2\left\vert V\right\vert -4$ if $\left\vert
V\right\vert \geq 3$. For a planar graph $G$, Axenovich \cite{b4}
showed that if $G$ has an interval $t$-coloring, then $t\leq
\frac{11}{6}\left\vert V\right\vert$. In \cite{b8,b13} interval edge
colorings of complete graphs, complete bipartite graphs, trees and
$n$-dimensional cubes were investigated. The $NP$-completeness of
the problem of existence of an interval edge coloring of an
arbitrary bipartite graph was shown in \cite{b14}. In papers
\cite{b2,b3,b6,b7,b9,b10,b11} the problem of existence and
construction of interval edge colorings was considered and some
bounds for the number of colors in such colorings of graphs were
given. In particular, it was proved in \cite{b2} that if a connected
graph $G$ admits an interval $t$-coloring, then $t\leq \left(
d+1\right) \left( \Delta -1\right) +1$, and if $G$ is a connected
bipartite, then the upper bound can be improved to $t\leq d\left(
\Delta -1\right) +1$, where $\Delta$ is the maximum degree in $G$
and $d$ is the diameter of $G$. In this paper we show that these
upper bounds can not be significantly improved.\\

\section{Definitions and preliminary results}\

All graphs considered in this paper are finite, undirected and have
no loops or multiple edges. Let $V(G)$ and $E(G)$ denote the sets of
vertices and edges of $G$, respectively. The maximum degree in $G$
is denoted by $\Delta (G)$, the chromatic index of $G$ by $\chi
^{\prime }\left( G\right)$ and the diameter of $G$ by $diam\left(
G\right)$. A partial edge coloring of $G$ is a coloring of some of
the edges of $G$ such that no two adjacent edges receive the same
color. If $\alpha $ is a partial edge coloring of $G$ and $v\in
V(G)$, then $S\left( v,\alpha \right)$ denotes the set of colors of
colored edges incident to $v$.

Let $\left\lfloor a\right\rfloor $ denote the largest integer less
than or equal to $a$. Given two graphs $G_{1}=(V_{1},E_{1})$ and
$G_{2}=(V_{2},E_{2})$, the Cartesian product $G_{1}\square G_{2}$ is
a graph $G =(V,E)$ with the vertex set $V =V_{1}\times V_{2}$ and
the edge set $E =\{((u_{1},u_{2}),(v_{1},v_{2}))|$ $u_{1}=v_{1}$ and
$(u_{2},v_{2})\in E_{2}$ or $u_{2}=v_{2}$ and $(u_{1},v_{1})\in
E_{1}\}$.

The set of all interval colorable graphs is denoted by
$\mathfrak{N}$ \cite{b1,b9}. For a graph $G\in \mathfrak{N}$, the
greatest value of $t$, for which $G$ has an interval $t$-coloring,
is denoted by $W\left(G\right)$.

The terms and concepts that we do not define can be found in
\cite{b15}.\\

We will use the following results.

\begin{theorem}
\label{mytheorem1}\cite{b8}. $~W\left(K_{\Delta,\Delta}\right)=
2\Delta-1$ for any $\Delta\in N$.
\end{theorem}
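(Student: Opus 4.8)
The plan is to establish the equality by proving the two inequalities $W\left(K_{\Delta,\Delta}\right)\leq 2\Delta-1$ and $W\left(K_{\Delta,\Delta}\right)\geq 2\Delta-1$ separately.

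For the upper bound I would invoke the bipartite estimate quoted in the introduction: for a connected bipartite graph admitting an interval $t$-coloring one has $t\leq d\left(\Delta-1\right)+1$, where $\Delta$ is the maximum degree and $d=diam\left(G\right)$. Since $K_{\Delta,\Delta}$ is connected and bipartite with $\Delta\left(K_{\Delta,\Delta}\right)=\Delta$ and $diam\left(K_{\Delta,\Delta}\right)=2$, this immediately yields $t\leq 2\left(\Delta-1\right)+1=2\Delta-1$ for every interval $t$-coloring, hence $W\left(K_{\Delta,\Delta}\right)\leq 2\Delta-1$. Should a self-contained argument be preferred, I would instead reason directly: take any interval coloring and let the color $1$ appear on an edge incident to a vertex $a$. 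Because $1$ is the smallest color and $a$ has degree $\Delta$, the colors at $a$ form the interval $\{1,2,\ldots,\Delta\}$. As $a$ is joined to every vertex of the opposite part, each such vertex $y$ carries an edge colored in $\{1,\ldots,\Delta\}$, so the interval of $\Delta$ colors at $y$ has maximum at most $2\Delta-1$; since every edge of $K_{\Delta,\Delta}$ meets that part, no color can exceed $2\Delta-1$.

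For the lower bound I would exhibit an explicit interval $\left(2\Delta-1\right)$-coloring. Writing the two parts as $\{x_{0},\ldots,x_{\Delta-1}\}$ and $\{y_{0},\ldots,y_{\Delta-1}\}$, I color the edge joining $x_{i}$ and $y_{j}$ by $i+j+1$. The colors at $x_{i}$ are then $i+1,i+2,\ldots,i+\Delta$ and the colors at $y_{j}$ are $j+1,\ldots,j+\Delta$, so at every vertex the colors are distinct and form an interval of length $\Delta$. The values used range over $1,2,\ldots,2\Delta-1$, and each color $c$ in this range is realized since $c-1=i+j$ is solvable with $0\leq i,j\leq \Delta-1$. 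Thus this is a valid interval $\left(2\Delta-1\right)$-coloring and $W\left(K_{\Delta,\Delta}\right)\geq 2\Delta-1$.

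Combining the two bounds gives $W\left(K_{\Delta,\Delta}\right)=2\Delta-1$. Neither direction presents a serious obstacle: the upper bound is essentially forced by the diameter-$2$ structure together with the cited bipartite estimate, while the only point requiring care in the lower bound is verifying simultaneously that the coloring is proper, that the color set at each vertex is an interval, and that every color in $\{1,\ldots,2\Delta-1\}$ occurs — all of which follow transparently from the simple form $i+j+1$ of the construction.
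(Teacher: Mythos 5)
Your proposal is correct, but note that the paper never proves this statement at all: Theorem~\ref{mytheorem1} is quoted from Kamalian's preprint \cite{b8} and used as a black box (its only role here is to supply the $d=2$ case of Theorem~\ref{mytheorem6}). So what you have written is a self-contained proof where the paper offers only a citation. Your lower-bound construction (color the edge $(x_i,y_j)$ with $i+j+1$) is the standard interval coloring of $K_{\Delta,\Delta}$ and is verified correctly. For the upper bound, invoking Theorem~\ref{mytheorem4}(2) is legitimate and non-circular within this paper's logical structure, since that bound is proved in \cite{b2} by a general distance argument that does not rely on knowing $W(K_{\Delta,\Delta})$; your alternative direct argument (the vertex carrying color $1$ has color set $\{1,\dots,\Delta\}$, so every vertex of the opposite part sees a color at most $\Delta$, forcing all colors to be at most $2\Delta-1$) is also sound and makes the result fully elementary. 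One trivial caveat: you assert $diam(K_{\Delta,\Delta})=2$, which fails for $\Delta=1$ (where the diameter is $1$); the conclusion $W(K_{1,1})=1=2\Delta-1$ still holds, and your direct argument covers that case without modification, so this is cosmetic rather than a gap.
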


\begin{theorem}
\label{mytheorem2}\cite{b12,b13}. $~W\left(K_{2^q}\right)\geq
2^{q+1}-2-q$ for any $q\in N$.
\end{theorem}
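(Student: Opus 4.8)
The plan is to establish the bound by an explicit recursive construction, proving by induction on $q$ that $K_{2^q}$ admits an interval $t$-coloring with $t=2^{q+1}-2-q$. For the base case $q=1$ the graph $K_2$ is a single edge, which is trivially an interval $1$-coloring, and $2^{2}-2-1=1$; it is convenient to also record an explicit interval $4$-coloring of $K_4$ for $q=2$ as a sanity check. For the inductive step I would split the vertex set of $K_{2^q}$ into two halves $A$ and $B$ with $|A|=|B|=2^{q-1}$, which decomposes the edge set into a copy of $K_{2^{q-1}}$ on $A$, a copy of $K_{2^{q-1}}$ on $B$, and the spanning complete bipartite graph between $A$ and $B$, namely $K_{2^{q-1},2^{q-1}}$. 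Each vertex then has degree $2^{q-1}-1$ inside its own half and degree $2^{q-1}$ across the cut, for a total of $2^q-1$, as required in $K_{2^q}$.

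First I would color the two halves using the interval coloring guaranteed by the induction hypothesis, and color the bipartite part using the interval $\left(2\cdot 2^{q-1}-1\right)$-coloring provided by Theorem~\ref{mytheorem1}, i.e.\ with exactly $2^{q}-1$ colors. The heart of the argument is then to fix three integer offsets, one for each piece, so that at every vertex $v$ the interval of colors that $v$ receives inside its half and the interval it receives across the cut are adjacent and therefore merge into a single interval. Arranging the colors of the two halves so that they share the color range of the previous level while the bipartite colors supply the $2^q-1$ genuinely new colors yields the recurrence $f(q)=f(q-1)+\left(2^{q}-1\right)$ with $f(1)=1$, whose solution is exactly $f(q)=2^{q+1}-2-q$; verifying that each vertex sees a contiguous block and that every color in $\{1,\ldots,f(q)\}$ is actually used then completes the step.

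The main obstacle is the alignment in the previous step. A single interval coloring of $K_{2^{q-1}}$ assigns genuinely different intervals to different vertices, so one cannot simply abut a uniform bipartite coloring to it; instead the bipartite coloring must be shifted vertex by vertex so that its $A$-side intervals meet the $A$-half intervals while, simultaneously, its $B$-side intervals meet the $B$-half intervals. Forcing both matchings to hold at once pushes me to strengthen the induction hypothesis, carrying along structural information about the coloring (for instance a reflective symmetry exchanging the roles of the smallest and largest colors, together with control over the endpoints of the interval seen at each vertex), so that the recursive colorings of the two halves can be stitched to one common bipartite coloring. I expect this bookkeeping, rather than any single inequality, to be the delicate part. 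It is also what explains the additive loss of order $q$ in the bound: the naive assignment that colors an edge by the sum of its two endpoint labels already spans $2^{q+1}-3$ colors, but it leaves an interior gap in the color set of every interior vertex, and repairing those gaps is what costs essentially one color per level of the recursion.
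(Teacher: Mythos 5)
First, a point of reference: the paper does not prove this statement at all; it is quoted as a preliminary result from \cite{b12,b13}, so your proposal has to be judged against the constructions in those references. Your overall skeleton is the right one and matches theirs: split $K_{2^q}$ into two copies of $K_{2^{q-1}}$ plus the cut $K_{2^{q-1},2^{q-1}}$, and exploit the recurrence $f(q)=f(q-1)+(2^q-1)$, $f(1)=1$, whose solution is indeed $2^{q+1}-2-q$. The arithmetic, the base case, and the closing remark about the sum-coloring $i+j-2$ (which spans $2^{q+1}-3$ colors but leaves a gap at every interior vertex) are all correct.

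The genuine gap is in the stitching mechanism, and it is not just unfinished --- the version you describe provably fails at the first nontrivial step $q=2$. You propose to color the cut with the interval $(2^q-1)$-coloring of $K_{2^{q-1},2^{q-1}}$ from Theorem~\ref{mytheorem1} and align it with the halves by global offsets, so that at every vertex the half-interval and the cut-interval are \emph{adjacent}. Take $K_4$ with halves $A=\{a_1,a_2\}$, $B=\{b_1,b_2\}$: in any interval $3$-coloring of the cut $K_{2,2}$ using colors $\{s+1,s+2,s+3\}$, the two vertices of the same part necessarily receive the distinct spectra $[s+1,s+2]$ and $[s+2,s+3]$. The single edge $a_1a_2$ of the $A$-half, with color $c_A$, must then satisfy $c_A\in\{s,s+3\}$ to merge with $[s+1,s+2]$ at $a_1$, and simultaneously $c_A\in\{s+1,s+4\}$ to merge with $[s+2,s+3]$ at $a_2$; these sets are disjoint, so no offsets exist. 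Indeed, in every interval $4$-coloring of $K_4$ (for example $a_1a_2\mapsto 2$, $b_1b_2\mapsto 2$, $a_1b_1\mapsto 1$, $a_1b_2\mapsto 3$, $a_2b_1\mapsto 3$, $a_2b_2\mapsto 4$) the restriction to the cut is \emph{not} an interval coloring of $K_{2,2}$: vertex $a_1$ sees cut colors $\{1,3\}$, two blocks whose gap is filled by the half color $2$. This is exactly what the cited constructions arrange --- the cut coloring is deliberately non-interval at some vertices, with the inductively colored halves plugging the holes --- and doing so requires carrying through the induction a precise description of the vertex spectra (left endpoints) of the coloring. Your own escape hatch, shifting the bipartite coloring ``vertex by vertex,'' is not implementable as stated (an edge has two endpoints, so its color cannot be shifted independently at each), and the strengthened induction hypothesis you say would be needed is never formulated. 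That hypothesis and its verification are not bookkeeping; they are the entire content of the proof.
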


\begin{theorem}
\label{mytheorem3}\cite{b12}. If $G,H\in \mathfrak{N}$, then
$G\square H\in \mathfrak{N}$.
\end{theorem}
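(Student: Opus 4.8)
The plan is to prove the stronger quantitative statement $W(G\square H)\ge W(G)+W(H)$, which in particular yields $G\square H\in\mathfrak{N}$. Fix an interval $W(G)$-coloring $\alpha$ of $G$ and an interval $W(H)$-coloring $\beta$ of $H$. For $u\in V(G)$ write $S(u,\alpha)=[\ell_G(u),r_G(u)]$ and for $w\in V(H)$ write $S(w,\beta)=[\ell_H(w),r_H(w)]$; these are genuine intervals of integers by definition (for a vertex with no incident colored edges I adopt the convention $r=0$). The idea is to build a coloring $\gamma$ of $G\square H$ by laying the $G$-colors in the lower part of each vertex's palette and the $H$-colors directly on top of them.

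Concretely, I would color the two edge types of the product by
\[
\gamma\big((u,w),(v,w)\big)=\alpha(uv)+\ell_H(w)-1,\qquad \gamma\big((u,w),(u,w')\big)=\beta(ww')+r_G(u).
\]
The key observation, and the reason the shifts are chosen this way, is that a $G$-edge has its second coordinate $w$ fixed at both endpoints, so the offset $\ell_H(w)-1$ is well defined on the edge; likewise an $H$-edge has its first coordinate $u$ fixed, so $r_G(u)$ is well defined. Then at a vertex $(u,w)$ the $G$-edges occupy exactly the colors $[\ell_G(u)+\ell_H(w)-1,\;r_G(u)+\ell_H(w)-1]$ while the $H$-edges occupy $[r_G(u)+\ell_H(w),\;r_G(u)+r_H(w)]$.

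The main point to verify is that these two blocks fit together into a single interval. This is precisely a telescoping identity: the top of the $G$-block is $r_G(u)+\ell_H(w)-1$ and the bottom of the $H$-block is $r_G(u)+\ell_H(w)$, so they are consecutive at every vertex $(u,w)$, independently of $u$ and $w$. Hence the palette at $(u,w)$ is the interval $[\ell_G(u)+\ell_H(w)-1,\;r_G(u)+r_H(w)]$, whose length is $\deg_G(u)+\deg_H(w)=\deg_{G\square H}(u,w)$; since within each block the colors are pairwise distinct (being translates of the distinct colors of $\alpha$, resp.\ $\beta$) and the two blocks are disjoint, $\gamma$ is a proper edge coloring in which every palette is an interval.

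Finally I would check that the set of used colors is exactly $\{1,\dots,W(G)+W(H)\}$ with no gaps, which the definition requires. Choosing $w$ incident to the $\beta$-color $1$ gives $\ell_H(w)=1$, so the $G$-edges in that copy of $G$ realize all colors $1,\dots,W(G)$; choosing $u$ incident to the $\alpha$-color $W(G)$ gives $r_G(u)=W(G)$, so the $H$-edges in that copy of $H$ realize all colors $W(G)+1,\dots,W(G)+W(H)$. Thus $\gamma$ is an interval $(W(G)+W(H))$-coloring and $G\square H\in\mathfrak{N}$. The only real obstacle is the design of the two offsets so that the blocks abut at every vertex simultaneously; once the pairing of $\ell_H(w)-1$ with $r_G(u)$ is found, the remaining verifications are routine.
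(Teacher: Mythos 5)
Your proposal is correct, but there is nothing in the paper to compare it against: the paper never proves Theorem~\ref{mytheorem3} --- it is quoted from Kubale's book \cite{b12} and used as a black box. Judged on its own merits, your argument is sound and is in fact the standard proof of this result, and it establishes the stronger quantitative statement $W(G\square H)\geq W(G)+W(H)$. The two offsets do exactly what you claim: at a vertex $(u,w)$ the $G$-edges fill the interval $[\ell_G(u)+\ell_H(w)-1,\;r_G(u)+\ell_H(w)-1]$, the $H$-edges fill $[r_G(u)+\ell_H(w),\;r_G(u)+r_H(w)]$, the two blocks abut, and their union has cardinality $\deg_G(u)+\deg_H(w)=\deg_{G\square H}((u,w))$; moreover no edge receives a color larger than $W(G)+W(H)$ (since $\ell_H(w)\leq W(H)$ and $r_G(u)\leq W(G)$), while your choice of a copy of $G$ at a vertex $w$ with $\ell_H(w)=1$ and a copy of $H$ at a vertex $u$ with $r_G(u)=W(G)$ shows that every color in $\{1,\dots,W(G)+W(H)\}$ actually occurs. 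Two small remarks. First, your convention for degenerate vertices is incomplete: for an isolated vertex $w$ of $H$ the $G$-edge formula needs $\ell_H(w)=1$, not just $r_H(w)=0$; this is harmless, since graphs in $\mathfrak{N}$ are customarily taken without isolated vertices. Second, it is worth noticing that the same design idea --- color each copy of one factor by a shifted interval coloring and give the connecting edges at a vertex the color $\max S(\cdot)+1$ --- is what drives the paper's own proof of Theorem~\ref{mytheorem5} for $P_{d}\square K_{2^q}$; the difference is that there the per-level shifts are taken much larger than an interval coloring of $P_d$ would dictate ($2^q$ per level rather than roughly $1$), which is exactly how the paper beats the generic bound $W(G)+W(H)$ in that special case.
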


\begin{theorem}
\label{mytheorem4}\cite{b2}. (1) If $G$ is a connected graph and
$G\in \mathfrak{N}$, then
\begin{center}
$W(G)\leq \left(diam(G)+1\right)\left( \Delta(G) -1\right) +1$.
\end{center}
(2) If $G$ is a connected bipartite graph and $G\in
\mathfrak{N}$, then
\begin{center}
$W(G)\leq diam(G)\left(\Delta(G) -1\right) +1$.
\end{center}
\end{theorem}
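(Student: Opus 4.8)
The plan is to fix an interval $W(G)$-coloring $\alpha$, so that $t=W(G)$, and to control for each vertex $v$ the interval $S(v,\alpha)=[a_v,b_v]$ of colors appearing at $v$. By the definition of an interval coloring these colors form a block of $\deg(v)$ consecutive integers, so $b_v-a_v=\deg(v)-1\le \Delta(G)-1$. The engine of both bounds is a one-edge estimate: if $uv\in E(G)$ has color $c$, then $a_u\le c\le b_v=a_v+\deg(v)-1$, whence $a_u-a_v\le \Delta(G)-1$; the symmetric inequality $a_v-a_u\le\deg(u)-1\le\Delta(G)-1$ gives $|a_u-a_v|\le \Delta(G)-1$. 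Summing this along a shortest $u$--$v$ path yields the global estimate $|a_u-a_v|\le d(u,v)\,(\Delta(G)-1)$ for all vertices $u,v$, where $d(u,v)$ denotes distance in $G$.

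Next I would locate the two extremal vertices. Since $\alpha$ uses every color, color $1$ lies on some edge; taking $u$ to be an endpoint of that edge forces $1\in S(u,\alpha)$ and hence $a_u=1$. Likewise color $t$ lies on some edge with an endpoint $w$ satisfying $b_w=t$, so $a_w=t-\deg(w)+1\ge t-\Delta(G)+1$. Feeding $u$ and $w$ into the global estimate gives $t-\Delta(G)\le a_w-a_u\le d(u,w)\,(\Delta(G)-1)\le diam(G)\,(\Delta(G)-1)$, and rearranging produces $t\le (diam(G)+1)(\Delta(G)-1)+1$, which is part~(1).

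For part~(2) the only slack separating the two bounds is a single factor $\Delta(G)-1$, so the goal is to replace $d(u,w)\le diam(G)$ by $d(u,w)\le diam(G)-1$ for a suitable choice of extremal vertices, and I expect the bipartite reduction to be the main obstacle. The key observation is that color $1$ and color $t$ each occupy an entire edge, giving two candidate endpoints on each side: say color $1$ sits on $xy$ (so $a_x=a_y=1$) and color $t$ on $pq$ (so $b_p=b_q=t$). In a bipartite graph adjacent vertices lie in different parts, so the distances from a fixed vertex to the two endpoints of any edge have opposite parities and, differing by at most $1$, cannot be equal; in particular $d(x,p)\ne d(x,q)$. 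As these are distinct integers, each at most $diam(G)$, the smaller is at most $diam(G)-1$. Choosing $u=x$ and letting $w$ be whichever of $p,q$ realizes the smaller distance gives $a_u=1$, $b_w=t$, and $d(u,w)\le diam(G)-1$, so the computation of part~(1) upgrades to $t\le (diam(G)-1)(\Delta(G)-1)+\Delta(G)=diam(G)(\Delta(G)-1)+1$. The routine points left to verify are that $S(v,\alpha)$ is genuinely a block of $\deg(v)$ consecutive integers and that the parity claim for distances to the two endpoints of an edge holds in every bipartite graph; once these are in place, both bounds follow from the same extremal-vertex argument.
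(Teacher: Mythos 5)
Your proof is correct, and since the paper states this theorem only as a cited preliminary result from reference \cite{b2} (Asratian--Kamalian) with no proof given, there is nothing internal to compare against; your argument --- bounding the shift of the minimum color $a_v$ by $\Delta(G)-1$ per edge, summing along a shortest path between an endpoint of a color-$1$ edge and an endpoint of a color-$t$ edge, and in the bipartite case using the parity of distances to the two endpoints of the color-$t$ edge to save one step --- is precisely the standard argument of that original paper. Both halves check out, including the endgame arithmetic $(diam(G)-1)(\Delta(G)-1)+\Delta(G)=diam(G)(\Delta(G)-1)+1$, so no gaps remain beyond the two routine facts you already flagged.
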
\

\section{Main results}\

\begin{theorem}
\label{mytheorem5} For any integers $d,q\in N$, there is a connected
graph $G$ with $diam(G)=d$,
\begin{center}
$\Delta(G)=\left\{
\begin{tabular}{ll}
$2^{q}-1$, if $d=1$,\\
~~$2^{q}$, ~~~if $d=2$,\\
$2^{q}+1$, if $d\geq 3$,\\
\end{tabular}%
\right.$
\end{center}
such that $G\in \mathfrak{N}$ and
\begin{center}
$W(G)\geq \left\{
\begin{tabular}{ll}
$(d+1)(\Delta(G)-1)-q+2$, if $d=1$,\\
$(d+1)(\Delta(G)-1)-q+1$, if $d=2$,\\
$(d+1)(\Delta(G)-1)-q-2$, if $d\geq 3$.\\
\end{tabular}%
\right.$
\end{center}
\end{theorem}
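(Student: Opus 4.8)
The plan is to construct explicit graphs whose parameters nearly saturate the Asratian--Kamalian bound of Theorem~\ref{mytheorem4}(1). The natural tool is Theorem~\ref{mytheorem2}, which says $W(K_{2^q}) \geq 2^{q+1}-2-q$. Observe that $K_{2^q}$ has maximum degree $\Delta = 2^q - 1$ and diameter $d = 1$, so that $(d+1)(\Delta-1)+1 = 2(2^q-2)+1 = 2^{q+1}-3$, while the guaranteed span is $2^{q+1}-2-q$; the difference is exactly $q-1$, which gives $W(G) \geq (d+1)(\Delta-1)-q+2$. This disposes of the case $d=1$ immediately by taking $G = K_{2^q}$, and shows these complete graphs are already within $q-1$ of optimal.

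The cases $d=2$ and $d\geq 3$ require building larger-diameter graphs from these dense building blocks while controlling both the diameter and the span from below. The key engine here will be Theorem~\ref{mytheorem3} (interval colorability is preserved under Cartesian products) together with the additivity of $W$ under $\square$. First I would recall or establish the companion fact that $W(G\square H)\geq W(G)+W(H)$: if $G$ has an interval $\alpha$-coloring and $H$ an interval $\beta$-coloring, then coloring the $G$-fibers with $\{1,\dots,\alpha\}$ and shifting the $H$-fibers up to $\{\alpha+1,\dots,\alpha+\beta\}$ (or interleaving appropriately) yields an interval $(\alpha+\beta)$-coloring, because at each vertex the two incident color-sets concatenate into one interval. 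I would then take products of copies of $K_{2^q}$ with a path $P_{d+1}$ (whose diameter is $d$ and which is interval-colorable) or with another suitable factor, so that the diameter of the product is the sum of the factor diameters and the maximum degree grows by exactly the right increment (contributing $+1$ to $\Delta$ via the path, explaining why $\Delta = 2^q$ when $d=2$ and $\Delta = 2^q+1$ when $d\geq 3$).

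Concretely, for $d=2$ I expect $G = K_{2^q}\square P_3$ (or a closely related product) to work: the path $P_3$ contributes diameter $2$ and degree $2$ at its middle vertex, so $\Delta(G)=2^q$, and $W(G)\geq W(K_{2^q})+W(P_3)=(2^{q+1}-2-q)+2$, which I would check matches $(d+1)(\Delta-1)-q+1 = 3(2^q-1)-q+1$. For $d\geq 3$ the factor supplying the extra diameter must itself have diameter $d$, contribute one more to the degree (giving $\Delta=2^q+1$), and have a span large enough that the sum lands at $(d+1)(\Delta-1)-q-2$; a path $P_{d+1}$ has $W(P_{d+1})=d$ and diameter $d$, and I would verify that $G=K_{2^q}\square P_{d+1}$ gives $\Delta=2^q+1$, $diam(G)=d+1$—so I would instead use $P_d$ to land the diameter on $d$ exactly, and track the arithmetic so the deficit from the Theorem~\ref{mytheorem4}(1) bound is exactly $q+2$.

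The main obstacle I anticipate is the bookkeeping that simultaneously pins down all three parameters ($\Delta$, $diam$, and the lower bound on $W$) as the piecewise expressions in the three regimes. The diameter of a Cartesian product is the sum of the factors' diameters and the degree is the sum of the factors' degrees, so choosing the path length and verifying $diam(K_{2^q}\square P_k)=k-1$ is routine; the delicate part is confirming that the additive span lower bound $W(G)\geq W(K_{2^q})+W(\text{path})$ holds with the interval property intact (i.e.\ that the concatenated color-sets at each vertex genuinely form an interval), and then matching the resulting bound $2^{q+1}-2-q+(\text{path contribution})$ against $(d+1)(\Delta-1)-q+c$ for the correct additive constant $c\in\{+2,+1,-2\}$ in each case. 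Once the product construction and the additivity lemma are in hand, each of the three cases reduces to a short computation selecting the right path factor.
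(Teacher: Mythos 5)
Your case $d=1$ is correct and coincides with the paper's (there the graph is just $K_{2^q}$, and Theorem~\ref{mytheorem2} gives the stated bound at once). For $d\geq 2$, however, your plan has a fatal quantitative gap: the additive inequality $W(G\square H)\geq W(G)+W(H)$, even if granted, is far too weak to reach the claimed lower bounds. For $G=P_{d}\square K_{2^q}$ (the very graph the paper uses, with $P_d$ the path on $d$ vertices) additivity yields only $W(G)\geq W(P_d)+W(K_{2^q})=(d-1)+(2^{q+1}-2-q)$, whereas the theorem demands $(d+1)2^{q}-2-q$; the shortfall is $(d-1)(2^{q}-1)$, unbounded in both $d$ and $q$. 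The whole point of the theorem is that the span can grow like the \emph{product} $diam(G)\cdot\Delta(G)$, and no bound that is additive in the factors (a path contributing $O(d)$ colors plus a clique contributing $O(2^{q})$ colors) can produce a product-type quantity. Concretely, for $d=2$, $q=2$ your bound gives $W\geq 5$ (or $6$ with your $P_3$ variant) while the statement needs $W\geq 8$. Moreover, your sketched proof of additivity is itself incorrect: coloring the $G$-fibers with $\{1,\dots,\alpha\}$ and the $H$-fibers with $\{\alpha+1,\dots,\alpha+\beta\}$ makes the union $S(u,\cdot)\cup\left(S(v,\cdot)+\alpha\right)$ at a vertex $(u,v)$ an interval only when $\max S(u,\cdot)=\alpha$ and $\min S(v,\cdot)=1$, which fails at most vertices. (There is also a bookkeeping slip in your $d=2$ case: $P_3$ on three vertices has diameter $2$ and middle degree $2$, so $K_{2^q}\square P_3$ has diameter $3$ and maximum degree $2^{q}+1$; the factor you want there is $P_2$, a single edge.)

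The missing idea, and what the paper actually does, is an \emph{overlapping} shift rather than end-to-end stacking. Take an interval $(2^{q+1}-2-q)$-coloring $\alpha$ of $K_{2^q}$ and color the $i$-th clique layer of $P_d\square K_{2^q}$ by $\alpha+(i-1)2^{q}$, i.e., shift by only $2^{q}$ per layer, which is much less than the span of $\alpha$; then give the path edge joining $v_j^{(i)}$ to $v_j^{(i+1)}$ the color $\max S(v_j^{(i)},\cdot)+1$. Since $K_{2^q}$ is $(2^{q}-1)$-regular, every vertex's palette under $\alpha$ is an interval of exactly $2^{q}-1$ consecutive colors, so this bridging color is simultaneously one more than the top of the palette at $v_j^{(i)}$ and one less than the bottom of the shifted palette at $v_j^{(i+1)}$; hence every vertex still sees an interval of colors. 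The layers' color ranges overlap pairwise, but their union stretches from $1$ up to $(d-1)2^{q}+(2^{q+1}-2-q)=(d+1)2^{q}-2-q$, which is exactly the quantity required in all three regimes of the theorem (Theorem~\ref{mytheorem3} is used only to certify $P_d\square K_{2^q}\in\mathfrak{N}$, not to bound $W$). Your proposal never exploits this overlap, and without it no choice of path factor can close the $(d-1)(2^{q}-1)$ deficit.
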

\begin{proof} For the proof we construct a graph $G_{d,q}$ which
satisfies the condition of the theorem. We define a graph $G_{d,q}$
as follows: $G_{d,q}=P_{d}\square K_{2^q}$. Clearly, $G_{d,q}$ is a
connected graph of diameter $d$ and
\begin{center}
$\Delta(G_{d,q})=\left\{
\begin{tabular}{ll}
$2^{q}-1$, if $d=1$,\\
~~$2^{q}$, ~~~if $d=2$,\\
$2^{q}+1$, if $d\geq 3$.\\
\end{tabular}%
\right.$
\end{center}
Since $P_{d},K_{2^q}\in \mathfrak{N}$, by Theorem \ref{mytheorem3}
we have $G_{d,q}\in \mathfrak{N}$.

Let us show that
\begin{center}
$W(G_{d,q})\geq \left\{
\begin{tabular}{ll}
$(d+1)(\Delta(G_{d,q})-1)-q+2$, if $d=1$,\\
$(d+1)(\Delta(G_{d,q})-1)-q+1$, if $d=2$,\\
$(d+1)(\Delta(G_{d,q})-1)-q-2$, if $d\geq 3$.\\
\end{tabular}%
\right.$
\end{center}

Let $V\left(K_{2^q}\right) =\left\{ v_{1},v_{2},\ldots
,v_{2^q}\right\}$ and

\begin{center}
$V\left(G_{d,q}\right) =\bigcup_{i=1}^{d}V^{i}\left(G_{d,q}\right)$,
\end{center}
\begin{center}
$E\left(G_{d,q}\right)
=\bigcup_{i=1}^{d}E^{i}\left(G_{d,q}\right)\cup
\bigcup_{j=1}^{2^{q}}E_{j}\left(G_{d,q}\right)$
\end{center}
where
\begin{center}
$V^{i}\left(G_{d,q}\right)=\{v_{j}^{(i)}|~1\leq j\leq 2^{q}\}$,
\end{center}
\begin{center}
$E^{i}\left(G_{d,q}\right)=\{(v_{j}^{(i)},v_{k}^{(i)})|~1\leq
j<k\leq 2^{q}\}$,
\end{center}
\begin{center}
$E_{j}\left(G_{d,q}\right)=\{(v_{j}^{(i)},v_{j}^{(i+1)})|~1\leq
i\leq d-1\}$.
\end{center}

For $i=1,2,\ldots,d$ define a subgraph $G_{i}$ of the graph
$G_{d,q}$ in the following way:
\begin{center}
$G_{i}=\left(V^{i}\left(G_{d,q}\right),E^{i}\left(G_{d,q}\right)\right)$.
\end{center}
Clearly, $G_{i}$ is isomorphic to $K_{2^q}$ for $i=1,2,\ldots,d$. By
Theorem \ref{mytheorem2} there exists an interval
$(2^{q+1}-2-q)$-coloring $\alpha$ of the graph $K_{2^q}$.

Define an edge coloring $\beta$ of the subgraphs
$G_{1},G_{2},\ldots,G_{d}$.

For $i=1,2,\ldots,d$ and for every $(v_{j}^{(i)},v_{k}^{(i)})\in
E(G_{i})$ we set:
\begin{center}
$\beta ((v_{j}^{(i)},v_{k}^{(i)}))=\alpha
((v_{j},v_{k}))+(i-1)2^{q}$,
\end{center}
where $1\leq j<k\leq 2^{q}$.

Now we define an edge coloring $\gamma$ of the graph $G_{d,q}$ in
the following way:

for $\forall e\in E(G_{d,q})$
\begin{center}
$\gamma(e)= \left\{
\begin{tabular}{ll}
~~~~~~~~~$\beta(e)$,~~~~~~~~~if $e\in E(G_{i})$, $1\leq i\leq d$,\\
$\max S(v_{j}^{(i)},\beta)+1$, if $e=(v_{j}^{(i)},v_{j}^{(i+1)})\in E_{j}\left(G_{d,q}\right)$, $1\leq i\leq d-1$, $1\leq j\leq 2^{q}$.\\
\end{tabular}%
\right.$
\end{center}

It can be verified that if $d=1$, then $\gamma$ is an interval
$(2^{q+1}-2-q)$-coloring of the graph $G_{1,q}$, if $d=2$, then
$\gamma$ is an interval $(3\cdot2^{q}-2-q)$-coloring of the graph
$G_{2,q}$ and $\gamma$ is an interval $((d+1)2^{q}-2-q)$-coloring of
the graph $G_{d,q}$ for $d\geq 3$. This implies the necessary lower
bounds for $W(G_{d,q})$.~$\square$
\end{proof}

\begin{theorem}
\label{mytheorem6} For any integers $d,\Delta\geq 2$, there is a
connected bipartite graph $G$ with $diam(G)=d$, $\Delta(G)=\Delta$,
such that $G\in \mathfrak{N}$ and $W(G)=d(\Delta-1)+1$.
\end{theorem}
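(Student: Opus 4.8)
The plan is to construct a connected bipartite graph that achieves the upper bound from Theorem~\ref{mytheorem4}(2) exactly. Since that bound is $W(G)\leq d(\Delta-1)+1$ for connected bipartite graphs, I need a construction whose maximum span is \emph{at least} $d(\Delta-1)+1$ (equality then following from the theorem). The natural building block is the complete bipartite graph $K_{\Delta,\Delta}$, because Theorem~\ref{mytheorem1} tells us $W(K_{\Delta,\Delta})=2\Delta-1$, which is exactly $1\cdot(\Delta-1)+1$ evaluated at $d=1$ ... but $K_{\Delta,\Delta}$ has diameter $2$, not the flexible diameter we need. So the strategy, paralleling the proof of Theorem~\ref{mytheorem5}, is to take a Cartesian product of a path with a complete bipartite graph and verify it stays bipartite with the prescribed diameter.

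First I would define $G = P_{d}\,\square\,K_{\Delta,\Delta}$ (or a similarly chosen product), where $P_d$ is the path on $d$ vertices. I would check three structural facts: that $G$ is bipartite (a Cartesian product of bipartite graphs is bipartite), that $\mathrm{diam}(G)=d$ when combined with the diameter of the $K_{\Delta,\Delta}$ factor, and that $\Delta(G)=\Delta$ as required — this last point needs care, since in a Cartesian product degrees add, so I may instead need a path factor that contributes degree via careful indexing, or adjust so the stated maximum degree comes out exactly $\Delta$. By Theorem~\ref{mytheorem3}, since both $P_d$ and $K_{\Delta,\Delta}$ lie in $\mathfrak{N}$, the product $G$ is interval colorable.

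The core computation is the lower bound $W(G)\geq d(\Delta-1)+1$. Mirroring the construction in the previous proof, I would take the optimal interval $(2\Delta-1)$-coloring $\alpha$ of each copy of $K_{\Delta,\Delta}$ guaranteed by Theorem~\ref{mytheorem1}, place one copy on each ``level'' $i=1,\dots,d$ of the path, and shift the colors level by level by a suitable offset so that the connecting edges $(v^{(i)},v^{(i+1)})$ can be colored to keep every vertex palette an interval. The shift amount per level should be chosen so the spans telescope and the total number of colors used reaches $d(\Delta-1)+1$; I expect the connecting edges to take the value $\max S(v^{(i)},\beta)+1$ exactly as in the preceding proof.

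The main obstacle I anticipate is reconciling the maximum degree: in $K_{\Delta,\Delta}$ each vertex already has degree $\Delta$, and the path edges add one or two more, which would push $\Delta(G)$ above the target. To hit $\Delta(G)=\Delta$ on the nose I would likely use $K_{\Delta-1,\Delta-1}$ as the factor (so interior path vertices reach degree $(\Delta-1)+2=\Delta+1$, again too large) — so the genuinely delicate step is selecting the right factor and path so that all three quantities $\mathrm{diam}=d$, $\Delta(G)=\Delta$, and $W(G)\geq d(\Delta-1)+1$ hold simultaneously, and then verifying that the shifted-and-stitched coloring $\gamma$ really does form an interval at every vertex, including the bipartite endpoints where two path-edges meet a complete-bipartite palette.
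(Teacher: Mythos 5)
There is a genuine gap, and it sits exactly at the point you yourself flag as ``the genuinely delicate step'' and then leave unresolved: no Cartesian product of a path with a complete bipartite graph can meet the hypotheses of the theorem, and the stitched coloring you describe cannot reach the required span even after repair. Concretely, in $P_m\square H$ degrees add coordinatewise, so interior vertices of your $G=P_d\square K_{\Delta,\Delta}$ have degree $\Delta+2$; moreover diameters of Cartesian factors add, so $diam(P_d\square K_{\Delta,\Delta})=(d-1)+2=d+1$, not $d$ (this second defect you did not notice). The only product that fixes both parameters is essentially $P_{d-1}\square K_{\Delta-2,\Delta-2}$, which has maximum degree $\Delta$ and diameter $d$. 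But now run your own stitching argument on it: each level carries an interval $(2(\Delta-2)-1)$-coloring of $K_{\Delta-2,\Delta-2}$, and the interval condition at the connecting edges (colored $\max S(v,\beta)+1$ below, $\min-1$ above) forces the per-level shift to be exactly $\Delta-1$, so the total span is $(d-2)(\Delta-1)+2\Delta-5=d(\Delta-1)-3$. Since the theorem demands \emph{equality} $W(G)=d(\Delta-1)+1$ with the upper bound of Theorem~\ref{mytheorem4}(2), falling short by an additive constant is fatal; this shortfall is intrinsic to the product construction, which is precisely why Theorem~\ref{mytheorem5} (whose proof you are mirroring) only gets within $q+2$ of the non-bipartite bound rather than attaining it. Your proposal also omits the boundary cases $\Delta=2$ and small $d$, where the repaired factor $K_{\Delta-2,\Delta-2}$ degenerates.

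For contrast, the paper does not use products at all. For $\Delta,d\geq 3$ it chains $\lfloor d/2\rfloor$ blocks, each a copy of $K_{\Delta,\Delta}$ with one or two perfect-matching edges deleted, and joins consecutive blocks by \emph{two bridge edges} attached precisely at the vertices whose degree was lowered by the deletions. This keeps the graph bipartite with maximum degree exactly $\Delta$ and diameter exactly $d$, and because each interface consists of only two bridges (rather than $2\Delta$ product edges constraining every vertex), the block colorings can be shifted by $2\Delta-2$ per block, giving total span $(d/2)(2\Delta-2)+1=d(\Delta-1)+1$ on the nose. The remaining cases are handled separately: $C_{2d}$ for $\Delta=2$, $K_{\Delta,\Delta}$ for $d=2$, and ad hoc constructions for odd $d$. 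The lesson is that to attain the bound exactly you must spend degree on long-range connections only at a constant number of vertices per ``layer,'' which a Cartesian product by a path structurally cannot do.
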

\begin{proof} For the proof we are going to construct a graph
$G_{d,\Delta}$ which satisfies the condition of the theorem. We
consider some cases.

Case 1: $\Delta=2$ or $d=2$.

If $\Delta=2$ and $d\geq 2$, then we take $G_{d,2}=C_{2d}$. Clearly,
$C_{2d}\in \mathfrak{N}$, $diam(C_{2d})=d$ and $\Delta(C_{2d})=2$.
From Theorem \ref{mytheorem4} we have $W(C_{2d})\leq d+1$ for $d\geq
2$.

Now let us show that $W(C_{2d})= d+1$ for $d\geq 2$.

Let $V\left(C_{2d}\right) =\left\{ v_{1},v_{2},\ldots
,v_{2d}\right\}$ and $E\left(C_{2d}\right) =\left\{
(v_{i},v_{i+1})|~1\leq i\leq 2d-1\right\}\cup \{(v_{1},v_{2d})\}$.

Define an edge coloring $\alpha$ of the graph $C_{2d}$ as follows:

1. $\alpha((v_{1},v_{2d}))=1$,

2. $\alpha((v_{i},v_{i+1}))=\alpha((v_{2d-i+1},v_{2d-i}))=i+1$ for
$i=1,2,\ldots,d$.

It is easy to see that $\alpha$ is an interval $(d+1)$-coloring of
the graph $C_{2d}$, thus $W(C_{2d})=d+1$.

If $d=2$ and $\Delta\geq 2$, then we take
$G_{2,\Delta}=K_{\Delta,\Delta}$. Clearly, $K_{\Delta,\Delta}\in
\mathfrak{N}$, $diam(K_{\Delta,\Delta})=2$ and
$\Delta(K_{\Delta,\Delta})=\Delta$. From Theorem \ref{mytheorem1} we
have $W(K_{\Delta,\Delta})=2\Delta-1$.

Case 2: $\Delta,d\geq 3$.

Subcase 2.1: $d$ is even.

Define the graph $G_{d,\Delta}$ as follows:

\begin{center}
$V\left(G_{d,\Delta}\right)=\{u_{j}^{(i)},v_{j}^{(i)}|~1\leq i\leq
\frac{d}{2}, 1\leq j\leq \Delta\}$,
\end{center}
\begin{center}
$E\left(G_{d,\Delta}\right)=E_{1}\cup E_{2}\cup E_{3}\cup E_{4}$,
\end{center}
where
\begin{center}
$E_{1}=\{(u_{i}^{(1)},v_{j}^{(1)})|~1\leq i\leq \Delta, 1\leq j\leq
\Delta\}\setminus \{(u_{\Delta}^{(1)},v_{\Delta}^{(1)})\}$,
\end{center}
\begin{center}
$E_{2}=\bigcup_{i=2}^{\frac{d}{2}-1}\{(u_{j}^{(i)},v_{k}^{(i)})|~1\leq
j\leq \Delta, 1\leq k\leq \Delta\}\setminus
\{(u_{1}^{(i)},v_{1}^{(i)}),(u_{\Delta}^{(i)},v_{\Delta}^{(i)})\}$,
\end{center}
\begin{center}
$E_{3}=\{(u_{\Delta}^{(i-1)},v_{1}^{(i)}),(v_{\Delta}^{(i-1)},u_{1}^{(i)})|~2\leq
i\leq \frac{d}{2}\}$,
\end{center}
\begin{center}
$E_{4}=\{(u_{i}^{(\frac{d}{2})},v_{j}^{(\frac{d}{2})})|~1\leq i\leq
\Delta, 1\leq j\leq \Delta\}\setminus
\{(u_{1}^{(\frac{d}{2})},v_{1}^{(\frac{d}{2})})\}$.
\end{center}
Clearly, $G_{d,\Delta}$ is a connected $\Delta$-regular bipartite
graph of diameter $d$. It is easy to see that $G_{d,\Delta}\in
\mathfrak{N}$. By Theorem \ref{mytheorem4} we have
$W(G_{d,\Delta})\leq d(\Delta-1)+1$.

Now we show that $W(G_{d,\Delta})= d(\Delta-1)+1$.

Define an edge coloring $\beta$ of the graph $G_{d,\Delta}$ in the
following way:\\

1. for every $(u_{j}^{(i)},v_{k}^{(i)})\in E(G_{d,\Delta})$
\begin{center}
$\beta ((u_{j}^{(i)},v_{k}^{(i)}))=(i-1)(2\Delta-1)+j+k-i$,
\end{center}
where $1\leq i\leq \frac{d}{2}$, $1\leq j\leq \Delta$, $1\leq k\leq \Delta$;\\

2. for $i=2,\ldots,\frac{d}{2}$
\begin{center}
$\beta ((u_{\Delta}^{(i-1)},v_{1}^{(i)}))=\beta
((v_{\Delta}^{(i-1)},u_{1}^{(i)}))=(i-1)(2\Delta-1)-i+2$.
\end{center}

It is not difficult to check that $\beta$ is an interval
$(d(\Delta-1)+1)$-coloring of the graph $G_{d,\Delta}$.

Subcase 2.2: $d$ is odd.

If $d=3$, then define the graph $G_{3,\Delta}$ as follows:

\begin{center}
$V\left(G_{3,\Delta}\right)=\{u_{i},u_{i}^{\prime}|~1\leq i\leq
\Delta-1\}\cup \{v_{j},v_{j}^{\prime}|~1\leq j\leq \Delta \}$,
\end{center}

\begin{center}
$E\left(G_{3,\Delta}\right)=\{(u_{i},v_{j}),(u_{i}^{\prime},v_{j}^{\prime})|~1\leq
i\leq \Delta-1, 1\leq j\leq \Delta\}\cup
\{(v_{i},v_{i}^{\prime})|~1\leq i\leq \Delta \}$.
\end{center}

Clearly, $G_{3,\Delta}$ is a connected $\Delta$-regular bipartite
graph of diameter $3$. It is easy to see that $G_{3,\Delta}\in
\mathfrak{N}$. By Theorem \ref{mytheorem4} we have
$W(G_{3,\Delta})\leq 3\Delta-2$.

Now we show that $W(G_{3,\Delta})= 3\Delta-2$.

Define an edge coloring $\gamma$ of the graph $G_{3,\Delta}$ in the
following way:\\

1. for $i=1,2,\ldots,\Delta-1$, $j=1,2,\ldots,\Delta$
\begin{center}
$\gamma ((u_{i},v_{j}))=i+j-1$,
\end{center}

2. for $i=1,2,\ldots,\Delta-1$, $j=1,2,\ldots,\Delta$
\begin{center}
$\gamma ((u_{i}^{\prime},v_{j}^{\prime}))=\Delta +i+j-1$,
\end{center}

3. for $i=1,2,\ldots,\Delta$
\begin{center}
$\gamma ((v_{i},v_{i}^{\prime}))=\Delta +i-1$.
\end{center}

It is not difficult to check that $\gamma$ is an interval
$(3\Delta-2)$-coloring of the graph $G_{3,\Delta}$.

Assume that $d\geq 5$.

Define the graph $G_{d,\Delta}$ as follows:

\begin{center}
$V\left(G_{d,\Delta}\right)=\{a,b_{1},b_{2},\dots,b_{\Delta-3},c,d_{1},d_{2},\dots,d_{\Delta-3}
\}\cup\{u_{j}^{(i)},v_{j}^{(i)}|~1\leq i\leq
\lfloor\frac{d}{2}\rfloor, 1\leq j\leq \Delta\}$,
\end{center}
\begin{center}
$E\left(G_{d,\Delta}\right)=E_{1}\cup E_{2}\cup E_{3}\cup E_{4}\cup
E_{5}$,
\end{center}
where
\begin{center}
$E_{1}=\{(u_{i}^{(1)},v_{j}^{(1)})|~1\leq i\leq \Delta, 1\leq j\leq
\Delta\}\setminus \{(u_{\Delta}^{(1)},v_{\Delta}^{(1)})\}$,
\end{center}
\begin{center}
$E_{2}=\{(u_{\Delta}^{(1)},a),(a,u_{1}^{(2)}),(v_{\Delta}^{(1)},c),(c,v_{1}^{(2)}),(a,c)\}\cup
\{(a,b_{i}),(c,d_{i})|~1\leq i\leq \Delta-3\}$,
\end{center}
\begin{center}
$E_{3}=\bigcup_{i=2}^{\lfloor\frac{d}{2}\rfloor-1}\{(u_{j}^{(i)},v_{k}^{(i)})|~1\leq
j\leq \Delta, 1\leq k\leq \Delta\}\setminus
\{(u_{1}^{(i)},v_{1}^{(i)}),(u_{\Delta}^{(i)},v_{\Delta}^{(i)})\}$,
\end{center}
\begin{center}
$E_{4}=\{(u_{\Delta}^{(i-1)},v_{1}^{(i)}),(v_{\Delta}^{(i-1)},u_{1}^{(i)})|~3\leq
i\leq \lfloor\frac{d}{2}\rfloor\}$,
\end{center}
\begin{center}
$E_{5}=\{(u_{i}^{(\lfloor\frac{d}{2}\rfloor)},v_{j}^{(\lfloor\frac{d}{2}\rfloor)})|~1\leq
i\leq \Delta, 1\leq j\leq \Delta\}\setminus
\{(u_{1}^{(\lfloor\frac{d}{2}\rfloor)},v_{1}^{(\lfloor\frac{d}{2}\rfloor)})\}$.
\end{center}
Clearly, $G_{d,\Delta}$ is a connected  bipartite graph with a
maximum degree $\Delta$ and a diameter $d$.

Now we show that $G_{d,\Delta}\in \mathfrak{N}$ and
$W(G_{d,\Delta})= d(\Delta-1)+1$.

Define an edge coloring $\lambda$ of the graph $G_{d,\Delta}$ in the
following way:\\

1. for every $(u_{j}^{(1)},v_{k}^{(1)})\in E(G_{d,\Delta})$
\begin{center}
$\lambda ((u_{j}^{(1)},v_{k}^{(1)}))=j+k-1$,
\end{center}
where $1\leq j\leq \Delta$, $1\leq k\leq \Delta$;\\

2.\begin{center}
$\lambda((u_{\Delta}^{(1)},a))=\lambda((v_{\Delta}^{(1)},c))=2\Delta-1,
\lambda((a,c))=2\Delta$;
\end{center}

3.\begin{center} $\lambda((a,b_{i}))=\lambda((c,d_{i}))=2\Delta+i$,
$i=1,2,\ldots,\Delta-3$;
\end{center}

4.\begin{center}
$\lambda((a,u_{1}^{(2)}))=\lambda((c,v_{1}^{(2)}))=3\Delta-2$;
\end{center}

5. for every $(u_{j}^{(i)},v_{k}^{(i)})\in E(G_{d,\Delta})$
\begin{center}
$\lambda
((u_{j}^{(i)},v_{k}^{(i)}))=(i-1)(2\Delta-1)+j+k-i+\Delta-1$,
\end{center}
where $2\leq i\leq \lfloor\frac{d}{2}\rfloor$, $1\leq j\leq \Delta$, $1\leq k\leq \Delta$;\\

6. for $i=3,\ldots,\lfloor\frac{d}{2}\rfloor$
\begin{center}
$\lambda((u_{\Delta}^{(i-1)},v_{1}^{(i)}))=\lambda
((v_{\Delta}^{(i-1)},u_{1}^{(i)}))=(i-1)(2\Delta-1)-i+\Delta+1$.
\end{center}

It can be verified that $\lambda$ is an interval
$(d(\Delta-1)+1)$-coloring of the graph $G_{d,\Delta}$, thus
$G_{d,\Delta}\in \mathfrak{N}$ and $W(G_{d,\Delta})\geq
d(\Delta-1)+1$. On the other hand, from Theorem \ref{mytheorem4} we
have $W(G_{d,\Delta})\leq d(\Delta-1)+1$, hence $W(G_{d,\Delta})=
d(\Delta-1)+1$.~$\square$
\end{proof}\

\end{document}